%
%
%
%

\documentclass[runningheads,a4paper]{llncs}

\usepackage{amssymb,amsmath,tikz}
\setcounter{tocdepth}{3}
\usepackage{graphicx}
\usetikzlibrary{patterns}

\usepackage{url}
\usepackage{enumerate}
\urldef{\maila}\path|cel@mit.edu|
\newcommand{\keywords}[1]{\par\addvspace\baselineskip
\noindent\keywordname\enspace\ignorespaces#1}

\makeatletter
\newcommand{\text@hyphens}{\mathcode`\-=`\-\relax}
\newcommand{\id}[1]{\ensuremath{\mathit{\text@hyphens#1}}}
\makeatother

\begin{document}

\mainmatter  

\title{On the Efficiency of Localized Work Stealing}

\titlerunning{On the Efficiency of Localized Work Stealing}

%
%
\author{Warut Suksompong\inst{1}\and Charles E. Leiserson\inst{2}\and Tao B. Schardl\inst{2}%
}
\authorrunning{W. Suksompong, C. E. Leiserson, T. B. Schardl}

\institute{Department of Computer Science, Stanford University\\
353 Serra Mall, Stanford, CA 94305, USA\\
\email{warut@cs.stanford.edu}
\and
MIT Computer Science and Artificial Intelligence Laboratory\\
32 Vassar St, Cambridge, MA 02139, USA\\
\email{\{cel,neboat\}@mit.edu}
}

%
%

\maketitle

\begin{abstract}
This paper investigates a variant of the work-stealing algorithm that we call the \textit{localized work-stealing algorithm}. The intuition behind this variant is that because of locality, processors can benefit from working on their own work. Consequently, when a processor is free, it makes a steal attempt to get back its own work. We call this type of steal a \textit{steal-back}. We show that the expected running time of the algorithm is $T_1/P+O(T_\infty P)$, and that under the ``even distribution of free agents assumption'', the expected running time of the algorithm is $T_1/P+O(T_\infty\lg P)$. In addition, we obtain another running-time bound based on ratios between the sizes of serial tasks in the computation. If $M$ denotes the maximum ratio between the largest and the smallest serial tasks of a processor after removing a total of $O(P)$ serial tasks across all processors from consideration, then the expected running time of the algorithm is $T_1/P+O(T_\infty M)$.
\keywords{parallel algorithm, multithreaded computation, work stealing, localization}
\end{abstract}

\section{Introduction}

Work stealing is an efficient and popular paradigm for scheduling multithreaded computations. While its practical benefits have been known for decades \cite{BurtonSl81,Halstead84} and several researchers have found applications of the paradigm \cite{AroraBlPl98,DinanLaSaKrNi09,KarpZh93,LeisersonScSu15}, Blumofe and Leiserson \cite{BlumofeLe99} were the first to give a theoretical analysis of work stealing. Their scheduler executes a fully strict (i.e., well-structured) multithreaded computations on $P$ processors within an expected time of $T_1/P+O(T_\infty)$, where $T_1$ is the minimum serial execution time of the multithreaded computation (the \textit{work} of the computation) and $T_\infty$ is the minimum execution time with an infinite number of processors (the \textit{span} of the computation.) 

In multithreaded computations, it sometimes occurs that a processor performs some computations and stores the results in its cache. Therefore, a work-stealing algorithm could potentially benefit from exploiting locality, i.e., having processors work on their own work as much as possible. Indeed, an experiment by Acar et al. \cite{AcarBlBl00} demonstrates that exploiting locality can improve the performance of the work-stealing algorithm by up to 80\%. Similarly, Guo et al. \cite{GuoZhCa10} found that locality-aware scheduling can achieve up to 2.6$\times$ speedup over locality-oblivious scheduling. In addition, work-stealing strategies that exploit locality have been proposed. Hierarchical work stealing, considered by Min et al. \cite{MinIaYe11} and Quintin and Wagner \cite{QuintinWa10}, contains mechanisms that find the nearest victim thread to preserve locality and determine the amount of work to steal based on the locality of the victim thread. More recently, Paudel et al. \cite{PaudelTaAm13} explored a selection of tasks based on the application-level task locality rather than hardware memory topology.

In this paper, we investigate a variant of the work-stealing algorithm that we call the \textit{localized work-stealing algorithm}. In the localized work-stealing algorithm, when a processor is free, it makes a steal attempt to get back its own work. We call this type of steal a \textit{steal-back}. We show that the expected running time of the algorithm is $T_1/P+O(T_\infty P)$, and that under the ``even distribution of free agents assumption'', the expected running time of the algorithm is $T_1/P+O(T_\infty\lg P)$. In addition, we obtain another running-time bound based on ratios between the sizes of serial tasks in the computation. If $M$ denotes the maximum ratio between the largest and the smallest serial tasks of a processor after removing a total of $O(P)$ serial tasks across all processors from consideration, then the expected running time of the algorithm is $T_1/P+O(T_\infty M)$.




This paper is organized as follows. Section \ref{sec:localizedsetting} introduces the setting that we consider throughout the paper. Section \ref{sec:delayseq} analyzes the localized work-stealing algorithm using the delay-sequence argument. Section \ref{sec:amortization} analyzes the algorithm using amortization arguments. Section \ref{sec:localizedvariants} considers variants of the localized work-stealing algorithm. Finally, Section \ref{sec:conclusion} concludes and suggests directions for future work.

\section{Localized Work-Stealing Algorithm}
\label{sec:localizedsetting}


Consider a setting with $P$ processors. Each processor owns some pieces of work, which we call \textit{serial tasks}. Each serial task takes a positive integer amount of time to complete, which we define as the \textit{size} of the serial task.  We assume that different serial tasks can be done in parallel and model the work of each processor as a binary tree whose leaves are the serial tasks of that processor. The trees are balanced in terms of the number of serial tasks on each branch, but the order in which the tasks occur in the binary tree is assumed to be given to us. We then connect the $P$ roots as a binary tree of height $\lg P$, so that we obtain a larger binary tree whose leaves are the serial tasks of all processors. 

As usual, we define $T_1$ as the work of the computation, and $T_\infty$ as the span of the computation. The span $T_\infty$ corresponds to the height of the aforementioned larger binary tree plus the size of the largest serial task. In addition, we define $T_{\infty}'$ as the height of the tree not including the part connecting the $P$ processors of height $\lg P$ at the top or the serial tasks at the bottom. Since $T_{\infty}'$ corresponds to a smaller part of the tree than $T_{\infty}$, we have $T_{\infty}'<T_{\infty}$.

The randomized work-stealing algorithm \cite{BlumofeLe99} suggests that whenever a processor is free, it should ``steal'' randomly from a processor that still has work left to do. In our model, stealing means taking away one of the two main branches of the tree corresponding to a particular processor, in particular, the branch that the processor is not working on. The randomized work-stealing algorithm performs $O(P(T_\infty+\lg(1/\epsilon)))$ steal attempts with probability at least $1-\epsilon$, and the execution time is $T_1/P+O(T_\infty+\lg P+\lg(1/\epsilon))$ with probability at least $1-\epsilon$.

This paper investigates a localized variant of the work-stealing algorithm. In this variant, whenever a processor is free, it first checks whether some other processors are working on its work. If so, it ``steals back'' randomly only from these processors. Otherwise, it steals randomly as usual. We call the two types of steal a \textit{general steal} and a \textit{steal-back}. The intuition behind this variant is that sometimes a processor performs some computations and stores the results in its cache. Therefore, a work-stealing algorithm could potentially benefit from exploiting locality, i.e., having processors work on their own work as much as possible.

We make a simplifying assumption that each processor maintains a list of the other processors that are working on its work. When a general steal occurs, the stealer adds its name to the list of the owner of the serial task that it has just stolen (not necessarily the same as the processor from which it has just stolen.) For example, if processor $P_1$ steals a serial task owned by processor $P_2$ from processor $P_3$, then $P_1$ adds its name to the $P_2$'s list (and not $P_3$'s list.) When a steal-back is unsuccessful, the owner removes the name of the target processor from its list, since the target processor has finished the owner's work.

An example of an execution of localized work-stealing algorithm can be found in \cite{Suksompong14}. We assume that the overhead for maintaining the list and dealing with contention for steal-backs is constant. This assumption is reasonable because adding (and later removing) the name of a processor to a list is done when a general steal occurs, and hence can be amortized with general steals. Randomizing a processor from the list to steal back from takes constant time. When multiple processors attempt to steal back from the same processor simultaneously, we allow an arbitrary processor to succeed and the remaining processors to fail, and hence do not require extra processing time.

\section{Delay-Sequence Argument}
\label{sec:delayseq}

In this section, we apply the delay-sequence argument to establish an upper bound on the running time of the localized work-stealing algorithm. The delay-sequence argument is used in \cite{BlumofeLe99} to show that the randomized work-stealing algorithm performs $O(P(T_\infty+\lg(1/\epsilon)))$ steal attempts with probability at least $1-\epsilon$. We show that under the ``even distribution of free agents assumption'', the expected running time of the algorithm is $T_1/P+O(T_\infty\lg P)$. We also show a weaker bound that without the assumption, the expected running time of the algorithm is $T_1/P+O(T_\infty P)$.

Since the amount of work done in a computation is always given by $T_1$, independent of the sequence of steals, we focus on estimating the number of steals. We start with the following definition. 

\begin{definition}
The \textit{even distribution of free agents assumption} is the assumption that when there are $k$ \textit{owners} left (and thus $P-k$ \textit{free agents}), the $P-k$ free agents are evenly distributed working on the work of the $k$ owners. That is, each owner has $P/k$ processors working on its work.
\end{definition}

While this assumption might not hold in the localized work-stealing algorithm as presented here, it is intuitively more likely to hold under the hashing modification presented in Section \ref{sec:localizedvariants}. When the assumption does not hold, we obtain a weaker bound as given in Theorem \ref{thm:distweakbound}.

Before we begin the proof of our theorem, we briefly summarize the delay-sequence argument as used by Blumofe and Leiserson \cite{BlumofeLe99}. The intuition behind the delay-sequence argument is that in a random process in which multiple paths of the process occur simultaneously, such as work stealing, there exists some path that finishes last. We call this path the \textit{critical path}. The goal of the delay-sequence argument is to show that it is unlikely that the process takes a long time to finish by showing that it is unlikely that the critical path takes a long time to finish. To this end, we break down the process into \textit{rounds}. We define a round so that in each round, there is a constant probability that the critical path is shortened. (In the case of work stealing, this means there exists a steal on the critical path.) This will allow us to conclude that there are not too many rounds, and consequently not too many steals in the process.

\begin{theorem}
\label{thm:evendist}
With the even distribution of free agents assumption, the number of steal attempts is $O(P\lg P(T_\infty+\lg(P/\epsilon)))$ with probability at least $1-\epsilon$, and the expected number of steal attempts is $O(P\lg PT_\infty)$. 
\end{theorem}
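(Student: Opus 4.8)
The plan is to adapt the delay-sequence argument of Blumofe and Leiserson \cite{BlumofeLe99} to the localized setting, where the key difference is that a steal-back targets one of at most $P/k$ processors (under the even-distribution assumption, with $k$ owners remaining) rather than one of $P$ processors uniformly. First I would set up the \emph{enabling tree} / critical-path machinery: at each moment the unexecuted portion of the computation has a well-defined critical path through the big binary tree, and each node on it becomes ready only after its predecessor along the path is executed. I would then define a \emph{round} to be a maximal block of consecutive steal attempts during which, summed over all owners, roughly $\Theta(P\lg P)$ steal attempts occur — that is, $\Theta(\lg P)$ attempts ``charged'' to the current owner of the critical path on average — chosen so that within one round the processor whose work sits at the head of the critical path is, with constant probability, hit by a steal-back (or a general steal), thereby advancing the critical path by at least one node.

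The core probabilistic lemma to establish is that in any round, the probability that the critical path is \emph{not} shortened is at most some constant $c<1$. Under the even-distribution assumption, when there are $k$ owners the processor holding the critical-path work is one of at most $P/k$ candidates for each of the $P-k$ free agents performing steal-backs, so a single relevant steal-back hits it with probability $\Omega(k/P)$; over $\Theta(P/k \cdot \lg P)$ such attempts in a round the miss probability is $(1-\Omega(k/P))^{\Theta((P/k)\lg P)} \le e^{-\Omega(\lg P)} \le 1/2$ for suitable constants — here the extra $\lg P$ factor in the round length is exactly what buys the high-probability bound across all $O(T_\infty)$ path nodes. General steals (when no one is on an owner's work) are handled by the usual $\Omega(1/P)$-per-attempt argument, which is only stronger. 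Since the critical path has length $O(T_\infty)$, at most $O(T_\infty)$ rounds can shorten it, and a standard Chernoff/union-bound over a hypothetical delay sequence — there are at most $O(T_\infty)$ ``slots'' and the path is determined by $O(T_\infty \lg(\text{tree size})) = O(T_\infty \lg P \cdot T_\infty)$-ish many choices, absorbed into the $\lg(P/\epsilon)$ term — shows that the probability of surviving $\Omega(T_\infty + \lg(P/\epsilon))$ rounds is at most $\epsilon$. Multiplying the number of rounds by the per-round steal count $\Theta(P\lg P)$ gives $O(P\lg P(T_\infty+\lg(P/\epsilon)))$ steal attempts with probability $\ge 1-\epsilon$, and integrating this tail bound (setting $\epsilon$ to drive the contribution down) yields the expectation $O(P\lg P\,T_\infty)$.

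The main obstacle I anticipate is making the ``round'' definition clean in the presence of two steal types and a dynamically shrinking owner count $k$: the per-attempt success probability $\Omega(k/P)$ varies over the execution, so either the rounds must be defined adaptively (length $\Theta((P/k)\lg P)$ depending on the current $k$), or one pessimistically uses $k=1$ giving round length $\Theta(P\lg P)$ uniformly — the theorem statement's bound suggests the latter, simpler choice is what is intended, and I would go with it, noting that it is only a constant-factor loss since a round with few owners is short on owners but each is heavily targeted. A secondary subtlety is bounding the number of distinct delay sequences to feed the union bound: unlike the classical argument where each path node contributes a $\lg P$-bit choice, here the tree has $\Theta(T_1)$ leaves, but the \emph{critical path} has only $O(T_\infty)$ nodes and at each the branching is $2$, so the sequence is described by $O(T_\infty)$ bits plus the identity of the round boundaries, keeping the union-bound cost inside the $\lg(P/\epsilon)$ term; I would spell this out carefully since it is where an erroneous extra $\lg$ factor could sneak in.
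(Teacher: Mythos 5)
Your overall plan---delay-sequence argument, rounds with constant failure probability, even distribution used to lower-bound the chance that a steal-back hits the right target---is in the spirit of the paper, but the aggregation step has a genuine gap. You define a single global critical path and a round consisting of $\Theta(P\lg P)$ steal attempts \emph{of any kind, by any processor}, and you claim each such round shortens the path with constant probability. This fails because a steal-back can only land in the stealer's \emph{own} subtree: if the critical path currently lies in processor $P_2$'s subtree, then steal-backs performed by $P_1$ have probability zero of shortening it, and an entire round's worth of attempts can consist of exactly such irrelevant steal-backs (with no general steals occurring at all). The paper makes this point explicitly in a remark, with a three-processor example in which $P_1$ performs many steal-backs while the critical path sits in $P_2$'s subtree. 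So the key lemma of your argument---``in any round the critical path fails to shorten with probability at most a constant $c<1$''---is false for your round definition, and padding the round length by $\lg P$ does not repair it (nor is that padding needed for the high-probability part, which already follows from a constant per-round failure probability and $R=\Theta(T_\infty+\lg(1/\epsilon))$ rounds).

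The paper's route, which you would need to adopt, is per-processor: for each processor $i$ one tracks its own critical path inside its own subtree, defines a round by the condition $\sum_t\bigl(S^t+PX^t\bigr)\ge P$, where $S^t$ counts all general steals (each hits the path with probability at least $1/P$) and $X^t$ counts only processor $i$'s own steal-backs weighted by their success probability, so the per-round failure probability is at most $e^{-1}$. The even-distribution assumption then gives $X_i^t\ge (P-i+1)/(i-1)$ (when the $P-i+1$ unfinished owners share the $i-1$ free agents evenly), which converts the delay-sequence bound into $W_i=O\bigl(\frac{i-1}{P-i+1}(T_\infty+\lg(1/\epsilon))\bigr)$ for the number of steal-backs by processor $i$. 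The $P\lg P$ factor in the theorem is \emph{not} a round-length artifact; it arises from summing $\sum_{i=2}^{P-1}\frac{i-1}{P-i+1}=\Theta(P\lg P)$ over the processors, and the $\lg(P/\epsilon)$ comes from the union bound over these $P$ per-processor events (replacing $\epsilon$ by $\epsilon/P$), not from enlarging rounds or from counting delay sequences more carefully. Your use of the even-distribution assumption (success probability $\Omega(k/P)$ per steal-back) is the right ingredient, but without the per-processor decomposition and the harmonic-type summation it does not yield the stated bound.
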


\begin{proof}
Consider any processor. At timestep $t$, let $S^t$ denote the number of general steals occurring at that timestep, and let $X^t$ be the random variable
\[
    X^t= 
\begin{cases}
    1/P^t,& \text{if the processor can steal back from } P^t \text{ other processors};\\
    0,              & \text{if the processor is working.}
\end{cases}
\]

We define a \textit{round} to be a consecutive number of timesteps $t$ such that 
\[\sum_t\left(S^t+PX^t\right)\geq P,\] 
and such that this inequality is not satisfied if we remove the last timestep from the round. Note that this condition is analogous to the condition of a round in \cite{BlumofeLe99}, where the number of steals is between $3P$ and $4P$. Here we have the term $S^t$ corresponding to general steals and the term $PX^t$ corresponding to steal-backs. 

We define the \textit{critical path} of the processor to be the path from the top of its binary tree to the serial task of the processor whose execution finishes last. We show that any round has a probability of at least $1-1/e$ of reducing the length of the critical path.

We compute the probability that a round does not reduce the length of the critical path. Each general steal has a probability of at least $1/P$ of stealing off the critical path and thus reducing its length. Each steal-back by the processor has a probability of $1/P^t$ of reducing the length of the critical path. At timestep $t$, the probability of not reducing the length of the critical path is therefore
\[ \left(1-\frac{1}{P}\right)^{S^t}\left(1-X^t\right)\leq e^{-\frac{S^t}{P}-X^t}, \] where we used the inequality $1+x\leq e^x$ for all real numbers $x$. Therefore, the probability of not reducing the length of the critical path during the whole round is at most 
\[ \prod_t e^{-\frac{S^t}{P}-X^t} = e^{-\sum_t\left(\frac{S^t}{P}+X^t\right)} \leq e^{-1}. \]

Note that this bound remains true even when there are concurrent thieves, since we are concerned with the probability that in a given round the length of the critical path is not reduced. If there are concurrent thieves trying to make a steal on the critical path, one of them will be successful, and the other unsuccessful thieves do not play a role in our analysis.

With this definition of a round, we can now apply the delay-sequence argument as in \cite{BlumofeLe99}. Note that in a single timestep $t$, we have $S^t\leq P$ and $PX^t\leq P$. Consequently, in every round, we have $P\leq\sum_t\left(S^t+PX^t\right)\leq 3P.$ 

Suppose that over the course of the whole execution, we have $\sum_t\left(S^t+PX^t\right)\geq 3PR$, where $R=cT_\infty + \lg(1/\epsilon)$ for some sufficiently large constant $c$. Then there must be at least $R$ rounds. Since each round has a probability of at most $e^{-1}$ of not reducing the length of the critical path, the delay-sequence argument yields that the probability that $\sum_t\left(S^t+PX^t\right)\geq 3PR=\Theta(P(T_\infty+\lg(1/\epsilon)))$ is at most $\epsilon$.

We apply the same argument to every processor. Suppose without loss of generality that processor 1's work is completed first, then processor 2's work, and so on, up to processor $P$'s work. Let $S_i$ denote the number of general steals up to the timestep when processor $i$'s work is completed, and let $X_i^t$ denote the value of the random variable $X^t$ corresponding to processor $i$. In particular, $S_P$ is the total number of general steals during the execution, which we also denote by $S$. We have 
\[\text{Pr}\left[S_i+\sum_tPX_i^t\geq \Theta(P(T_\infty+\lg(1/\epsilon)))\right]\leq \epsilon.\]

Now we use our even distribution of free agents assumption. This means that when processor $i$ steals back, there are at most $(i-1)/(P-i+1)$ processors working on its work. Hence $X_i^t\geq (P-i+1)/(i-1)$ whenever $X_i^t\neq 0$. Letting $W_i$ be the number of steal-backs performed by processor $i$, we have
\[\text{Pr}\left[S_i+\frac{P(P-i+1)}{i-1}W_i\geq \Theta(P(T_\infty+\lg(1/\epsilon)))\right]\leq \epsilon.\]
For processor $2\leq i\leq P-1$, this says
\[\text{Pr}\left[\dfrac{i-1}{P(P-i+1)}S_i+W_i\geq \Theta\left(\dfrac{i-1}{P-i+1}(T_\infty+\lg(1/\epsilon))\right)\right]\leq \epsilon.\]
In particular, we have 
\[\text{Pr}\left[W_i\geq \Theta\left(\dfrac{i-1}{P-i+1}(T_\infty+\lg(1/\epsilon))\right)\right]\leq \epsilon.\]
For processor $P$, we have 
\[\text{Pr}\left[S+\frac{P}{P-1}W_P\geq \Theta(P(T_\infty+\lg(1/\epsilon)))\right]\leq \epsilon.\]
Since $P\geq P-1$, we have 
\[\text{Pr}\left[S+W_P\geq \Theta(P(T_\infty+\lg(1/\epsilon)))\right]\leq \epsilon.\]
Since $\sum_{i=2}^{P-1}\frac{i-1}{P-i+1}$ grows as $P\lg P$, adding up the estimates for each of the $P$ processors and using the union bound, we have 
\[\text{Pr}\left[S+\sum_{i=1}^PW_i\geq \Theta(P\lg P(T_\infty+\lg(1/\epsilon)))\right]\leq P\epsilon.\]
Substituting $\epsilon$ with $\epsilon/P$ yields the desired bound. 

Since the tail of the distribution decreases exponentially, the expectation bound follows.
\end{proof}

The bound on the execution time follows from Theorem \ref{thm:evendist}.

\begin{theorem}
With the even distribution of free agents assumption, the expected running time, including scheduling overhead, is $T_1/P+O(T_\infty\lg P)$. Moreover, for any $\epsilon>0$, with probability at least $1-\epsilon$, the execution time on $P$ processors is $T_1/P+O(\lg P(T_\infty+\lg(P/\epsilon)))$.
\end{theorem}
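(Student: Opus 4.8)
The plan is to derive the running-time bound directly from the bound on the number of steal attempts established in Theorem~\ref{thm:evendist}, using the standard token-accounting argument of Blumofe and Leiserson~\cite{BlumofeLe99}. First I would observe that at every timestep each of the $P$ processors is doing exactly one of two things: executing one unit of time of some serial task, or performing one steal attempt (a general steal or a steal-back) together with the associated $O(1)$ list-maintenance bookkeeping. Accordingly, at each timestep place one token per processor into a \emph{work bucket} or a \emph{steal bucket}. Over the whole execution the work bucket collects exactly $T_1$ tokens, since the total amount of serial-task work is $T_1$ regardless of the steal sequence; the steal bucket collects at most $c\cdot A$ tokens, where $A$ is the total number of steal attempts and $c=O(1)$ is the per-steal overhead (maintaining the per-processor lists and resolving steal-back contention) that is assumed constant in Section~\ref{sec:localizedsetting}, the list additions and removals being amortized against general steals.

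If $T$ denotes the execution time, the two buckets together contain $PT$ tokens, so $PT\le T_1+cA$, i.e.\ $T\le T_1/P+O(A/P)$. It then only remains to substitute the two bounds on $A$ supplied by Theorem~\ref{thm:evendist}. For the high-probability claim, with probability at least $1-\epsilon$ we have $A=O\bigl(P\lg P\,(T_\infty+\lg(P/\epsilon))\bigr)$, and dividing by $P$ gives $T\le T_1/P+O\bigl(\lg P\,(T_\infty+\lg(P/\epsilon))\bigr)$. For the expectation claim, $\mathrm{E}[A]=O(P\lg P\,T_\infty)$, so taking expectations in $PT\le T_1+cA$ gives $\mathrm{E}[T]\le T_1/P+O(T_\infty\lg P)$. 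Any additive lower-order terms such as a $\lg P$ ramp-up at the start of the execution are subsumed, since $T_\infty\ge\lg P$ (the span already includes the top tree of height $\lg P$).

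The reduction itself is routine; the two points that need care are (i) checking that a free processor never idles but immediately issues a steal attempt, so that the token accounting is exact and every processor-timestep is charged either to $T_1$ or to $A$, and (ii) confirming that, under the standing assumption of Section~\ref{sec:localizedsetting}, the cost of maintaining the lists and handling steal-back contention is genuinely $O(1)$ amortized against steal attempts, so that the phrase ``including scheduling overhead'' inflates the steal bucket only by the constant factor $c$ that the $O(\cdot)$ hides. I expect point (ii)---the claim that all the localized bookkeeping is constant per steal and can be amortized away---to be the only real subtlety, since it is precisely what lets the entire running-time bound be expressed in terms of the steal-attempt count alone.
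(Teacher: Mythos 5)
Your proposal is correct and follows essentially the same route as the paper, whose proof is exactly the ``work plus steal attempts, divided by $P$'' accounting: it adds $T_1$ to the steal-attempt bounds of Theorem~\ref{thm:evendist} and divides by $P$. Your token-bucket elaboration and the remark that the $O(1)$ list-maintenance overhead (assumed in Section~\ref{sec:localizedsetting}) is absorbed into the steal-attempt count simply spell out what the paper leaves implicit.
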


\begin{proof}
The amount of work is $T_1$, and Theorem \ref{thm:evendist} gives a bound on the number of steal attempts. We add up the two quantities and divide by $P$ to complete the proof.
\end{proof}


Without the even distribution of free agents assumption, we obtain a weaker bound, as the following theorem shows.

\begin{theorem}
\label{thm:distweakbound}
The number of steal attempts is $O(P^2(T_\infty+\lg(P/\epsilon)))$ with probability at least $1-\epsilon$.
\end{theorem}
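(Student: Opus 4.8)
The plan is to reuse the delay-sequence machinery from the proof of Theorem~\ref{thm:evendist} unchanged and to replace only its final accounting step, which is the one place where the even distribution of free agents assumption enters. Recall that, before that assumption was invoked, the delay-sequence argument applied to any fixed processor $i$ already established
\[
\text{Pr}\left[S_i+\sum_t P X_i^t\geq \Theta(P(T_\infty+\lg(1/\epsilon)))\right]\leq \epsilon,
\]
where $S_i$ is the number of general steals up to the timestep at which processor $i$'s work completes and $X_i^t$ is the random variable $X^t$ for processor $i$; as before, let $W_i$ denote the number of steal-backs performed by processor $i$ and $S=S_P$ the total number of general steals.

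First I would use the only bound on $X_i^t$ available without the assumption: a processor can steal back from at most $P$ other processors, so $P^t\le P$ and hence $P X_i^t = P/P^t\ge 1$ whenever $X_i^t\ne 0$. Summing over timesteps gives $\sum_t P X_i^t\ge W_i$, so by nonnegativity the displayed inequality implies $\text{Pr}[W_i\geq \Theta(P(T_\infty+\lg(1/\epsilon)))]\leq \epsilon$ for every $i$, and, taking $i=P$ and using $S=S_P\le S_P+\sum_t P X_P^t$, also $\text{Pr}[S\geq \Theta(P(T_\infty+\lg(1/\epsilon)))]\le\epsilon$. A union bound over these $P+1$ events shows that, with probability at least $1-(P+1)\epsilon$, each of $S,W_1,\dots,W_P$ is $O(P(T_\infty+\lg(1/\epsilon)))$, so the total number of steal attempts $S+\sum_{i=1}^P W_i$ is $O(P^2(T_\infty+\lg(1/\epsilon)))$. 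Replacing $\epsilon$ by $\epsilon/(P+1)$ then turns $\lg(1/\epsilon)$ into $O(\lg(P/\epsilon))$ and the failure probability into at most $\epsilon$, yielding the stated bound.

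There is no genuine obstacle here: the theorem is essentially the observation that, without the even distribution assumption, the per-processor steal-back bound cannot be pushed below $\Theta(P(T_\infty+\lg(1/\epsilon)))$, so summing the worst-case contributions of all $P$ processors costs an extra factor of $P$ relative to Theorem~\ref{thm:evendist}, where the harmonic-type sum $\sum_i (i-1)/(P-i+1)=\Theta(P\lg P)$ had saved a factor of order $P/\lg P$. The only points needing care are verifying that the delay-sequence bound carried over from the previous proof truly uses no assumption (it does not — the assumption is applied only afterwards), and tracking the union bound and the $\epsilon\mapsto\epsilon/(P+1)$ substitution so that the logarithms and constants come out as claimed.
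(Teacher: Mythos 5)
Your proposal is correct and follows essentially the same route as the paper: the paper likewise reuses the delay-sequence bound from Theorem~\ref{thm:evendist}, replaces the even-distribution lower bound on $X_i^t$ by $X_i^t\geq 1/P$ (so $\text{Pr}[S_i+W_i\geq \Theta(P(T_\infty+\lg(1/\epsilon)))]\leq\epsilon$), and finishes with the same union bound and $\epsilon$-rescaling. Your minor variation of bounding $W_i$ and $S$ as $P+1$ separate events rather than the sums $S_i+W_i$ makes no substantive difference.
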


\begin{proof}
We apply a similar analysis using the delay-sequence argument as in Theorem \ref{thm:evendist}. The difference is that here we have $X_i^t\geq 1/P$ instead of $X_i^t\geq (P-i+1)/(i-1)$. Hence, instead of 
\[\text{Pr}\left[S_i+\frac{P(P-i+1)}{i-1}W_i\geq \Theta(P(T_\infty+\lg(1/\epsilon)))\right]\leq \epsilon,\]
we have
\[\text{Pr}\left[S_i+W_i\geq \Theta(P(T_\infty+\lg(1/\epsilon)))\right]\leq \epsilon.\]
The rest of the analysis proceeds using the union bound as in Theorem \ref{thm:evendist}.
\end{proof}

Again, the bound on the execution time follows from Theorem \ref{thm:distweakbound}.

\begin{theorem}
The expected running time of the localized work-stealing algorithm, including scheduling overhead, is $T_1/P+O(T_\infty P)$. Moreover, for any $\epsilon>0$, with probability at least $1-\epsilon$, the execution time on $P$ processors is $T_1/P+O(P(T_\infty+\lg(P/\epsilon)))$.
\end{theorem}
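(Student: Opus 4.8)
The plan is to obtain this as an immediate corollary of Theorem~\ref{thm:distweakbound}, in exactly the same way the $T_1/P+O(T_\infty\lg P)$ bound was deduced from Theorem~\ref{thm:evendist}. The total amount of useful work performed during any execution is $T_1$, independent of the schedule. By the constant-overhead assumption of Section~\ref{sec:localizedsetting}, every steal attempt (general steal or steal-back, successful or not), together with the bookkeeping for the owner lists, costs only $O(1)$ time and can be charged to a work step or to a steal attempt. Hence the total time summed over all $P$ processors is $T_1+O(N)$, where $N$ is the number of steal attempts, and since the processors run in parallel the execution time is at most $(T_1+O(N))/P$.

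For the high-probability statement I would simply plug in Theorem~\ref{thm:distweakbound}: with probability at least $1-\epsilon$ we have $N=O(P^2(T_\infty+\lg(P/\epsilon)))$, so the execution time is $T_1/P+O(P(T_\infty+\lg(P/\epsilon)))$ with probability at least $1-\epsilon$, which is the claim.

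For the expectation bound I would convert the tail estimate of Theorem~\ref{thm:distweakbound} into a bound on $E[N]$ by integrating the tail. Since $\Pr[N\ge\Theta(P^2(T_\infty+\lg(P/\epsilon)))]\le\epsilon$, solving for $\epsilon$ as a function of the threshold and using $E[N]=\int_0^\infty\Pr[N>x]\,dx$ gives $E[N]=O(P^2T_\infty+P^2)$, where the additive $O(P^2)$ comes from the exponentially decaying $\lg(1/\epsilon)$ contribution. Because the span $T_\infty$ includes the height-$\lg P$ top tree (and the largest serial task), we have $T_\infty\ge\lg P\ge 1$, so $O(P^2)=O(P^2T_\infty)$ and therefore $E[N]=O(P^2T_\infty)$. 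Adding $T_1$ and dividing by $P$ yields the expected running time $T_1/P+O(T_\infty P)$.

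The only mild subtlety — hardly an obstacle — is making the tail-to-expectation conversion rigorous and observing that the additive $\lg(1/\epsilon)$ and $\lg P$ terms are absorbed by $T_\infty$; the rest is the same accounting used immediately after Theorem~\ref{thm:evendist}.
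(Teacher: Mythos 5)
Your proposal is correct and follows essentially the same route as the paper: the running time is obtained by adding the work $T_1$ to the steal-attempt bound of Theorem~\ref{thm:distweakbound} and dividing by $P$, with the expectation bound following from the exponentially decaying tail exactly as the paper does implicitly (and as it states explicitly after Theorem~\ref{thm:evendist}). Your extra remarks on the tail-to-expectation conversion and on $T_\infty\ge\lg P$ merely spell out details the paper leaves implicit.
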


\begin{proof}
The amount of work is $T_1$, and Theorem \ref{thm:distweakbound} gives a bound on the number of steal attempts. We add up the two quantities and divide by $P$ to complete the proof.
\end{proof}

\begin{remark}
In the delay-sequence argument, it is not sufficient to consider the critical path of only one processor (e.g., the processor that finishes last.)

For example, suppose that there are 3 processors, $P_1,P_2$, and $P_3$. $P_1$ owns 50 serial tasks of size 1 and 1 serial task of size 100, $P_2$ owns 1 serial task of size 1 and 1 serial task of size 1000, and $P_3$ owns no serial task. At the beginning of the execution, $P_3$ has a probability of 1/2 of stealing from $P_1$. If it steals from $P_1$ and gets stuck with the serial task of size 100, $P_1$ will perform several steal-backs from $P_3$, while the critical path is on $P_2$'s subtree. 

Hence, the steal-backs by $P_1$ do not contribute toward reducing the length of the critical path. 
\end{remark}

We briefly discuss the scalability of our localized work-stealing strategy. The bound $T_P \leq T_1/P + O(T_{\infty})$ provided by Blumofe and Leiserson \cite{BlumofeLe99} means that when $P \ll T_1/T_{\infty}$, we achieve linear speedup, i.e., $T_P \approx T_1/P$. Indeed, when $P \ll T_1/T_{\infty}$, we have that $T_{\infty}\ll T_1/P$, which implies that the term $T_1/P$ is the dominant term in the sum $T_1/P + O(T_{\infty})$. On the other hand, for our bound of $T_P \leq T_1/P + O(T_{\infty} P)$, when $P \ll \sqrt{T_1/T_{\infty}}$, we have that $T_{\infty}P\ll T_1/P$, and hence the term $T_1/P$ dominates in the sum $T_1/P + O(T_{\infty} P)$. As a result, we achieve linear speedup in localized work stealing when $P \ll \sqrt{T_1/T_{\infty}}$. In other words, we have square-rooted the effective parallelism.  Thus the application scales, but not as readily as in vanilla randomized work stealing.

\section{Amortization Analysis}
\label{sec:amortization}

In this section, we apply amortization arguments to obtain bounds on the running time of the localized work-stealing algorithm. We show that if $M$ denotes the maximum ratio between the largest and the smallest serial tasks of a processor after removing a total of $O(P)$ serial tasks across all processors from consideration, then the expected running time of the algorithm is $T_1/P+O(T_\infty M)$.

We begin with a simple bound on the number of steal-backs.

\begin{theorem}
The number of steal-backs is at most $T_1+O(PT_\infty)$ with high probability.
\end{theorem}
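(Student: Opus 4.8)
The plan is to partition all steal-backs into \emph{successful} and \emph{failed} steal-backs and bound each class separately: the $T_1$ term will come from the successful ones (a deterministic argument) and the $O(PT_\infty)$ term from the failed ones (where the randomness enters).

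For successful steal-backs, the key structural observation is that whenever a processor $p$ executes a successful steal-back it acquires an entire subtree of its own work and begins descending it, and the node $p$ is currently working on is never the stealable branch (only the \emph{other} branch of each internal node is stealable). Hence $p$ is guaranteed to descend its current spine all the way to a leaf and execute at least one full serial task—thus at least one timestep of honest work—before it can possibly become free again. Consequently, between any two successive successful steal-backs performed by $p$, and also after the last one, at least one timestep of serial-task work is attributed to $p$, and these timesteps lie in disjoint intervals. Since the total amount of serial-task work over the whole execution is exactly $T_1$, summing over all processors shows that the number of successful steal-backs is at most $T_1$.

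For failed steal-backs I would exploit the bookkeeping built into the model: each failed steal-back deletes one name from some owner's list, and names are inserted into lists only when a general steal occurs (plus an initial $O(P)$ names, if any, and at most $O(P)$ extra insertions in the degenerate case where a stolen branch lies in the top tree connecting the $P$ processors and so nominally has several owners). Hence the number of failed steal-backs is at most the number of general steals plus $O(P)$. It then remains to show that the number of general steals is $O(PT_\infty)$ with high probability—this does not follow from Theorem~\ref{thm:distweakbound}, which counts all steals. For this I would re-run the Blumofe--Leiserson delay-sequence argument \cite{BlumofeLe99} (in the style of the proof of Theorem~\ref{thm:evendist}), but applied to the critical path of the \emph{whole} computation and counting only general steals: a general steal picks a uniformly random processor among those that still have work, exactly as an ordinary work-stealing steal does, so it reduces the length of the global critical path with probability at least $1/P$, while steal-backs and the timesteps dominated by them are simply ignored in this count. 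Defining a round to consist of $\Theta(P)$ consecutive general steals, each round shortens the critical path with probability at least $1-1/e$, so after $\Theta(P(T_\infty+\lg(1/\epsilon)))$ general steals the critical path is exhausted with probability at least $1-\epsilon$; taking $\epsilon$ inverse-polynomial in $P$ and using $T_\infty\geq\lg P$ gives $O(PT_\infty)$ general steals with high probability. Adding the three bounds yields the claim.

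The main obstacle is the last step: one must verify that restricting attention to general steals—and discarding all timesteps in which free processors are doing steal-backs rather than general steals—does not break the delay-sequence argument, i.e., that a general steal genuinely retains its $1/P$ chance of advancing the global critical path in the presence of concurrent steal-backs, and that the round-based potential argument still telescopes correctly even when the rounds are spread out arbitrarily in time. Everything else is either deterministic (the $T_1$ bound and the reduction of failed steal-backs to general steals) or a routine adaptation of the standard analysis.
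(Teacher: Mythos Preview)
Your approach is essentially the same as the paper's: split steal-backs into successful and unsuccessful, amortize each successful steal-back against the timestep of work the stealer performs immediately afterward (giving at most $T_1$), amortize each unsuccessful steal-back against the general steal that inserted the corresponding name into the owner's list, and bound general steals by $O(PT_\infty)$ via the Blumofe--Leiserson delay-sequence argument. The paper simply asserts the last bound without re-deriving it, so your explicit justification (and your correct identification of it as the only step requiring care) is, if anything, more thorough than the paper's own proof.
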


\begin{proof}
Every successful steal-back can be amortized by the work done by the stealer in the timestep following the steal-back. Every unsuccessful steal-back can be amortized by a general steal. Indeed, recall our assumption that after each unsuccessful steal-back, the target processor is removed from the owner's list. Hence each general steal can generate at most one unsuccessful steal-back. Since there are at most $O(PT_\infty)$ general steals with high probability, we obtain the desired bound.
\end{proof}

The next theorem amortizes each steal-back against general steals, using the height of the tree to estimate the number of general steals.

\begin{theorem}
\label{steal-back}
Let $N$ denote the number of general steals in the computation, and let $T_\infty'$ denote the height of the tree not including the part connecting the $P$ processors of height $\lg P$ at the top or the serial tasks at the bottom. (In particular, $T_\infty'<T_\infty.$) Then there are at most $T_\infty'N$ steal-back attempts.
\end{theorem}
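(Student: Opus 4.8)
The plan is to charge every steal-back \emph{attempt} to a pair consisting of a general steal together with a ``depth level'' that records how far the relevant thief has descended into the victim owner's tree. Fix an owner $i$ and let $h_i\le T_\infty'$ denote the height of $i$'s binary tree $T_i$ of serial tasks. Call a general steal a \emph{general steal of $i$'s work} if the subtree it removes lies inside $T_i$; since every stolen subtree lies in exactly one owner's tree, the numbers $N_i$ of such steals satisfy $\sum_i N_i=N$. Call a maximal time interval during which a fixed processor $j\neq i$ continuously holds a nonempty piece of $i$'s work a \emph{stint of $j$ on $i$}. A processor $j\neq i$ can obtain $i$'s work only through a general steal --- it never steals $i$'s work \emph{back}, since that work is not its own --- so each such stint is opened by a general steal of $i$'s work, and conversely each general steal of $i$'s work opens exactly one stint; hence there are exactly $N_i$ stints on $i$.

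The next step would be the charging itself. Consider a steal-back attempt made by $i$ against a processor $j$; at that instant $j$ is, or was most recently, in some stint $\sigma$ on $i$, opened by a general steal $g_\sigma$ of $i$'s work. Define the \emph{level} of the attempt to be the depth within $T_i$ of the shallowest node in $j$'s deque that a thief of $j$ would take, or $h_i$ if $j$ currently holds no such node (in which case the attempt fails). This maps each steal-back attempt by $i$ to a pair $(g_\sigma,\ell)$ with $\ell\in\{1,\dots,h_i\}$, and the heart of the argument is that this map is injective. If $\ell<h_i$ the attempt succeeds and removes $j$'s shallowest stealable node; by the deque semantics of Section~\ref{sec:localizedsetting} (a thief takes the oldest frame, while the worker pushes and pops at the young end) the level is nondecreasing throughout $\sigma$ and strictly exceeds its previous value right after any successful steal-back against $j$, so no two successful steal-backs in $\sigma$ can share a level. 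If $\ell=h_i$ the attempt fails, and then $i$ deletes $j$ from its list; $j$ can re-enter $i$'s list only through a fresh general steal of $i$'s work, that is, only in a later stint, so $\sigma$ contains at most one failed steal-back by $i$ against $j$.

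Granting injectivity, each general steal of $i$'s work is charged at most once for each level in $\{1,\dots,h_i\}$, hence at most $h_i\le T_\infty'$ times; and every steal-back attempt $i$ ever makes is charged, because $i$ steal-backs only from processors currently on its list, and list membership witnesses a prior general steal of $i$'s work and therefore a stint. Thus $i$ performs at most $h_iN_i\le T_\infty'N_i$ steal-back attempts, and summing over all owners $i$ and using $\sum_iN_i=N$ yields the claimed bound of $T_\infty'N$.

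The step I expect to be the main obstacle is the deque-monotonicity claim: one has to verify, from the tree model of Section~\ref{sec:localizedsetting}, that after a branch is stolen the victim's new shallowest stealable node is a strict descendant of the old one, and that the victim's own progress --- completing serial tasks and backtracking up the tree --- never exposes a shallower stealable node during a stint. A smaller point to nail down is the list bookkeeping, namely that a name leaves an owner's list \emph{only} via a failed steal-back, so that each failed steal-back by $i$ corresponds to a distinct deletion from $i$'s list and hence to a distinct general steal of $i$'s work.
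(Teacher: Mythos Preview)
Your proposal is correct and follows essentially the same amortization scheme as the paper: charge each steal-back by an owner $i$ against the general steal that put the victim into possession of $i$'s work, and observe that the subtree so acquired has height at most $T_\infty'$, so at most $T_\infty'$ (plus one, for the single unsuccessful attempt) steal-backs can be charged to it. The paper simply asserts that ``each general steal generates at most $T_\infty'$ steal-backs'' and then separately handles the chain case where $P_k$ steals from $P_j$ who already held $P_i$'s work; your stint formalism absorbs that chain case automatically, and your explicit level map with its injectivity claim is a more careful rendering of the same height argument that the paper leaves implicit.

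Regarding your two flagged concerns: the deque-monotonicity claim does hold in the binary-tree model of Section~\ref{sec:localizedsetting}, because the victim's deque is always the sequence of untaken siblings along its current root-to-position path, ordered from shallow (oldest) to deep (newest); the worker only pushes strictly deeper nodes and pops from the deep end, while any steal removes the shallow end, so the top-of-deque depth is nondecreasing and strictly increases on each steal. The paper relies on exactly this fact without stating it. Your list-bookkeeping concern is also sound and matches the paper's conventions; note that the paper itself concedes the count is really $T_\infty'+1$ per general steal once the single terminal failed steal-back is included, which aligns with your assignment of level $h_i$ to failures.
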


\begin{proof}
Suppose that a processor $P_i$ steals back from another processor $P_j$. This means that earlier, $P_j$ performed a general steal on $P_i$ which resulted in this steal-back. We amortize the steal-back against the general steal. Each general steal generates at most $T_\infty'$ steal-backs (or $T_\infty'+1$, to be more precise, since there can be an unsuccessful steal-back after $P_j$ completed all of $P_i$'s work and $P_i$ erased $P_j$'s name from its list.) Since there are $N$ general steals in our computation, there are at most $T_\infty'N$ steal-back attempts.

After $P_j$ performed the general steal on $P_i$, it is possible some other processor $P_k$ makes a general steal on $P_j$. This does not hurt our analysis. When $P_i$ steals back from $P_k$, we amortize the steal-back against the general steal that $P_k$ makes on $P_j$, not the general steal that $P_j$ makes on $P_i$.
\end{proof}

Since there are at most $O(PT_\infty)$ general steals with high probability, Theorem \ref{steal-back} shows that there are at most $O(T_\infty'PT_\infty)$ steals in total with high probability. 

The next theorem again amortizes each steal-back against general steals, but this time also using the size of the serial tasks to estimate the number of general steals.

\begin{theorem}
\label{ratio}
Define $N$ and $T_\infty'$ as in Theorem \ref{steal-back}, and let $X$ be any positive integer. Remove a total of at most $X$ serial tasks from consideration. (For example, it is a good idea to exclude the largest or the smallest serial tasks.) For each processor $i$, let $M_i$ denote the ratio between its largest and the smallest serial tasks after the removal. Let $M=\max_i M_i$. Then the total number of steal-back attempts is $O(N\min(M,T_\infty'))+T_\infty'X$.
\end{theorem}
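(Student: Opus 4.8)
The starting point is Theorem~\ref{steal-back}, which already gives at most $(T_\infty'+1)N$ steal-back attempts. The plan is therefore to prove the complementary bound that there are $O(MN)+O(T_\infty'X)$ steal-back attempts; the stated bound then follows from the elementary inequality $\min\{aN,\,bN+c\}\le\min(a,b)N+c$, applied with $a=T_\infty'+1$, $b=O(M)$, and $c=O(T_\infty'X)$.

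To prove the complementary bound I would amortize each steal-back against a general steal exactly as in Theorem~\ref{steal-back}: a steal-back performed by a processor $P_i$ against a processor $P_k$ is charged to the general steal through which $P_k$ obtained the subtree (of $P_i$'s work) that it currently holds. I would then split the steal-backs according to which serial task the victim $P_k$ is executing at the instant of the steal-back. Since a processor runs a serial task to completion once it begins it — and while doing so it is not free and so acquires no new work — each steal-back against that processor strictly decreases the height of the subtree available to be stolen from it, and therefore at most $T_\infty'+1$ steal-backs can occur while any fixed serial task is being executed. Each removed task is executed exactly once, so the steal-backs occurring while a victim executes a \emph{removed} task number at most $X(T_\infty'+1)=O(T_\infty'X)$ in total.

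The crux is to show that each general steal is charged only $O(M)$ steal-backs that occur while the victim executes a non-removed task. Suppose $P_k$ acquires a subtree $U$ of $P_i$'s work, of height $h$, and let $s_i,\ell_i$ denote $P_i$'s smallest and largest retained task sizes, so $s_i\ge\ell_i/M$. Each steal-back against $P_k$ removes the top branch that $P_k$ is not working on, so after $j$ steal-backs $P_k$ is left with a subtree of height $h-j$ whose number of unexecuted leaves has shrunk by roughly a factor $2^j$; moreover $P_k$ stays busy on this work for at most $\ell_i$ times the number of leaves it ultimately retains. On the other hand, between two consecutive steal-backs the owner $P_i$ must finish the chunk it reclaimed at the previous steal-back, and because that chunk contains a non-removed task this costs at least $s_i\ge\ell_i/M$ timesteps. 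Comparing the geometrically shrinking reclaimed chunks against the per-steal-back cost of at least $\ell_i/M$ via a short geometric-series computation shows that the victim finishes its retained work after only $O(\lg M)\subseteq O(M)$ steal-backs. Summing $O(M)$ over the $N$ general steals, adding the removed-task contribution, and charging the (at most one per general steal) unsuccessful steal-backs against general steals as in the earlier theorems yields the complementary bound.

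The main obstacle is that the geometric argument above is clean only when nothing else interferes, whereas in general other processors may general-steal from the victim $P_k$ (shrinking its subtree faster than the steal-backs alone would) and the chunks the owner reclaims may themselves be general-stolen (letting the owner reclaim again sooner), and either effect can inflate the number of steal-backs charged to a single general steal. The resolution is that every such interfering event is itself a general steal: between two consecutive interferences the geometric argument applies verbatim, and each general steal interferes with only the one chain active at its victim at that moment, so the total number of interferences is $O(N)$ and the per-general-steal count stays $O(M)$ after this additional layer of amortization. Making this stitching precise — together with the bookkeeping for unsuccessful steal-backs and for the case of a stolen subtree that contains serial tasks of several owners — is where the real work lies.
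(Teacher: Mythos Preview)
Your overall amortization scheme---separate out the steal-backs that touch removed tasks (costing $O(T_\infty'X)$), argue that each general steal is charged only $O(M)$ of the remaining steal-backs, and then absorb interfering general steals by re-charging---matches the paper's exactly, including the treatment of interference.

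Where you diverge is the core step. The paper does not do any timestep accounting. Instead it observes that if the victim's current subtree contains no removed task, then (using the balanced leaf counts and the fact that every leaf size lies in $[s_i,\ell_i]$) each steal-back removes at least a $1/(M+1)$ fraction of the victim's remaining work; and since both owner and victim work at unit rate, the owner can reclaim at most half of the victim's initial work. Solving $\bigl(\tfrac{M}{M+1}\bigr)^K=\tfrac12$ gives $K=O(M)$ directly. This is shorter and avoids tracking how the victim's own processing interleaves with the halving.

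Your timestep argument has a bookkeeping mismatch. You categorize steal-backs by the serial task the \emph{victim} is currently executing, but the lower bound you then invoke (``the reclaimed chunk contains a non-removed task, hence costs at least $s_i$'') is a statement about the \emph{owner's} chunk, which your categorization does not control; likewise ``$P_k$ stays busy for at most $\ell_i$ times the number of retained leaves'' needs every retained leaf to be non-removed. The clean fix---and this is what the paper does---is to categorize instead by whether the victim's \emph{entire current subtree} contains a removed task: then both bounds follow immediately. With that repair your geometric comparison could plausibly yield $O(\log M)$ rather than $O(M)$, a mild sharpening, but as written the sketch does not carry the argument through.
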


\begin{proof}

There can be at most $T_\infty'X$ steal-backs performed on subtrees that include one of the $X$ serial tasks, since each subtree has height at most $T_\infty'$. 

Consider any other steal-back that processor $P_i$ performs on processor $P_j$. It is performed against a subtree that does not include one of the $X$ serial tasks. Therefore, it obtains at least $1/(M+1)$ of the total work in that subtree, leaving at most $M/(M+1)$ of the total work in $P_j$'s subtree. We amortize the steal-back against the general steal that $P_j$ performed on $P_i$ earlier. 

How many steal-backs can that general steal generate? We first assume that there are no general steals performed on $P_i$ or $P_j$ during the steal-backs. Then, $P_i$ can only steal back at most half of $P_j$'s work (since $P_j$ is working all the time, and thus will finish half of its work by the time $P_i$ steals half of its work).  To obtain the estimate, we solve for $K$ such that 
\[\left(\dfrac{M}{M+1}\right)^K=\dfrac12,\] 
and we obtain
\[K=\dfrac{\lg 2}{\lg(M+1)-\lg(M)}.\]
By integration, we have 
\[\int_M^{M+1}\dfrac{1}{M+1} dx < \int_M^{M+1}\dfrac{1}{x}dx < \int_M^{M+1}\dfrac{1}{M}dx,\] 
so that 
\[\dfrac{1}{M+1}<\ln(M+1)-\ln(M)<\dfrac{1}{M},\] or \[M<\dfrac{1}{\ln(M+1)-\ln(M)}<M+1.\] 
Since $\lg$ and $\ln$ are off each other by only a constant factor, $K$ grows as $O(M)$. This means that one random steal will be amortized against at most $O(M)$ steal-backs. Combined with the estimate involving $T_\infty$ from Theorem \ref{steal-back}, we have the desired bound, assuming that there are no general steals performed on $P_i$ or $P_j$ during these steal-backs.

Now we show that this last assumption is in fact unnecessary. That is, if there are general steals performed on $P_i$ or $P_j$ during these steal-backs, our estimate still holds. If a general steal is performed on $P_i$ after $P_i$ steals back from $P_j$, we amortize this steal-back against this general steal instead of against the general steal that $P_j$ made on $P_i$. Since each general steal can be amortized against in this way by at most one steal-back, our estimate holds. 

On the other hand, if a general steal is performed on $P_j$, then the steal-backs that $P_i$ has performed on $P_j$ become an even higher proportion of $P_j$'s work, and the remaining steal-backs proceed as usual. So our estimate also holds in this case.
\end{proof}

Applying Theorem \ref{ratio}, we may choose $O(P)$ serial tasks to exclude from the computation of $M$ without paying any extra ``penalty'', since the penalty $O(PT_\infty')$ is the same as the number of general steals. After we have excluded these serial tasks, if $M$ turns out to be constant, we obtain the desired $O(PT_\infty)$ bound on the number of steal-backs. The next theorem formalizes this fact.

\begin{theorem}
Define $N$ and $T_\infty'$ as in Theorem \ref{steal-back}, and remove any $O(P)$ serial tasks from consideration. For each processor $i$, let $M_i$ denote the ratio between its largest and the smallest serial tasks after the removal. Let $M=\max_i M_i$. Then the expected execution time on $P$ processors is $T_1/P+O(T_\infty\min(M,T_\infty'))$.
\end{theorem}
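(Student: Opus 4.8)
The plan is to feed the deterministic steal-back bound of Theorem~\ref{ratio} with the high-probability bound of $O(PT_\infty)$ general steals, and then convert the resulting bound on the total number of steal attempts into a running-time bound in the same way as in the proofs following Theorems~\ref{thm:evendist} and~\ref{thm:distweakbound}. First I would instantiate Theorem~\ref{ratio} with $X=\Theta(P)$, namely the $O(P)$ serial tasks we are permitted to remove. This yields that, as a function of the number $N$ of general steals, the number of steal-back attempts is at most $O(N\min(M,T_\infty'))+T_\infty'X=O(N\min(M,T_\infty'))+O(PT_\infty')$, deterministically. Since $M\ge 1$ and $T_\infty'\ge 1$, we have $\min(M,T_\infty')\ge 1$, so the penalty term $O(PT_\infty')$ is dominated by $O(PT_\infty\min(M,T_\infty'))$ once $N$ is bounded.

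Next I would invoke the delay-sequence bound on general steals: the localized algorithm performs at most $O(P(T_\infty+\lg(1/\epsilon)))$ general steals with probability at least $1-\epsilon$ (the same bound used in Theorem~\ref{thm:distweakbound}), so $E[N]=O(PT_\infty)$ and the tail of $N$ decays exponentially. Plugging $E[N]=O(PT_\infty)$ into the deterministic inequality from Theorem~\ref{ratio} and using linearity of expectation, the expected number of steal-back attempts is $O(E[N]\min(M,T_\infty'))+O(PT_\infty')=O(PT_\infty\min(M,T_\infty'))$. Adding the $O(PT_\infty)$ expected general steals, the expected total number of steal attempts is $O(PT_\infty\min(M,T_\infty'))$.

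Finally I would turn this into an execution-time bound exactly as before. At every timestep, each of the $P$ processors is either executing a unit of the computation (charged to $T_1$) or making a steal attempt of constant overhead, so $P\cdot T_P$ is at most $T_1$ plus the total number of steal attempts. Taking expectations and dividing by $P$ gives $E[T_P]\le T_1/P+O(T_\infty\min(M,T_\infty'))$, which is the claim.

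The only point that needs care is the passage from the high-probability bound on $N$ to an expectation bound that then flows through the deterministic inequality of Theorem~\ref{ratio}: one must check that the exponential tail on $N$ survives multiplication by $\min(M,T_\infty')$ and addition of the $T_\infty'X$ term, i.e., that rare events do not contribute a non-negligible amount to the expectation. This is routine because $\min(M,T_\infty')\le T_\infty'<T_\infty$ and $X=O(P)$, so every factor is polynomially bounded while the tail is exponentially small; the remaining bookkeeping (absorbing $O(PT_\infty')$ into $O(PT_\infty\min(M,T_\infty'))$ and the scheduling argument for $T_P$) mirrors the earlier proofs verbatim.
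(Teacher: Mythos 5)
Your proposal is correct and follows essentially the same route as the paper: apply Theorem~\ref{ratio} with $X=O(P)$, use the $O(PT_\infty)$ expected bound on general steals so that the penalty term $O(PT_\infty')$ is absorbed and the expected number of steal-backs is $O(PT_\infty\min(M,T_\infty'))$, then add $T_1$ and divide by $P$. Your extra remarks about the exponential tail and the domination of the $T_\infty'X$ term only make explicit what the paper leaves implicit.
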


\begin{proof}
The amount of work is $T_1$, and Theorem \ref{ratio} gives a bound on the number of steal-back attempts in terms of the number of steal attempts. Since we know that the expected number of steal attempts is $O(PT_\infty)$, the expected number of steal-back attempts is $O(PT_\infty\min(M,T_\infty'))$. We add this to the amount of work and divide by $P$ to complete the proof.
\end{proof}

\begin{remark}
In the general case, it is not sufficient to amortize the steal-backs against the general steals. That is, there can be (asymptotically) more steal-backs than general steals, as is shown by the following example. 

Suppose that the adversary has control over the general steals. When there are $k$ owners left, the adversary picks one of them, say $P_i$. The other $k-1$ owners are stuck on a large serial task while $P_i$'s task is being completed. The $P-k$ free agents perform general steals so that $P_i$'s tree is split evenly (in terms of the number of serial tasks, not the actual amount of work) among the $P-k+1$ processors. Then $P_i$ finishes its work, while the other $P-k$ processors are stuck on a large serial task. $P_i$ performs repeated steal-backs on the $P-k$ processors until each of them is only down to its large serial task. Then they finish, and we are down to $k-1$ owners. In this case, $O(P^2T_\infty)$ steal-backs are performed, but only $O(P^2)$ general steals. 

In particular, it is not sufficient to use the bound on the number of general steals as a ``black box'' to bound the number of steal-backs. We still need to use the fact that the general steals are random.
\end{remark}


\section{Other Strategies}
\label{sec:localizedvariants}

In this section, we consider two variants of the localized work-stealing algorithm. The first variant, hashing, is designed to alleviate the problem of pile-up in the localized work-stealing algorithm. It assigns an equal probability in a steal-back to each owner that has work left. In the second variant, mugging, a steal-back takes all or almost all of the work of the processor being stolen from. A simple amortization argument yields an expected number of steals of $O(PT_\infty)$.

\subsection*{Hashing}

Intuitively, the way in which the general steals are set up in the localized work-stealing algorithm supports pile-up on certain processors' work. Indeed, if there are several processors working on processor $P_1$'s work, the next general steal is more likely to get $P_1$'s work, in turn further increasing the number of processors working on $P_1$'s work.

A possible modification of the general steal, which we call \textit{hashing}, operates as follows: first choose an owner uniformly at random among the owners who still has work left, then choose a processor that is working on that owner's work uniformly at random.

Loosely speaking, this modification helps in the critical path analysis both with regard to the general steals and to the steal-backs. Previously, if there are $k$ owners left, a general steal has a $\dfrac{k}{P}$ probability of hitting one of the $k$ remaining critical paths. Now, suppose there are $P_1,P_2,\ldots,P_k$ processors working on the $k$ owners' work, where $P_1+\ldots+P_k=P$. The probability of hitting one of the critical paths is 
\[\dfrac{1}{k}\left(\dfrac{1}{P_1}+\ldots+\dfrac{1}{P_k}\right)\geq \dfrac{k}{P}\] 
by the arithmetic-harmonic mean inequality \cite{Gwanyama04}. Also, the modified algorithm chooses the owner randomly, giving each owner an equal probability of being stolen from.

\subsection*{Mugging}

A possible modification of the steal-back, which we call \textit{mugging}, operates as follows: instead of $P_i$ taking only the top thread from $P_j$'s deque during a steal-back (i.e. half the tree), $P_i$ takes either (1) the whole deque, except for the thread that $P_j$ is working on; or (2) the whole deque, including the thread that $P_j$ is working on (in effect preempting $P_j$.) Figure \ref{fig:mugging} shows the processor of $P_j$ in each of the cases. 

Figure \ref{fig:mugging}(a) corresponds to the unmodified case, Figure \ref{fig:mugging}(b) to case (1), and Figure \ref{fig:mugging}(c) to case (2). The yellow threads are the ones that $P_i$ steals from $P_j$, while the white threads are the ones that $P_j$ is working on. In Figure \ref{fig:mugging}(c), the bottom thread is preempted by $P_i$'s steal.

In both modifications here, each general steal can generate at most one steal-back. Therefore, the expected number of steal-backs is $O(PT_\infty)$, and the expected number of total steals is also $O(PT_\infty)$.

\pgfdeclarepatternformonly{stripes}
{\pgfpointorigin}{\pgfpoint{1cm}{1cm}}
{\pgfpoint{1cm}{1cm}}
{
    \pgfpathmoveto{\pgfpoint{0cm}{0cm}}
    \pgfpathlineto{\pgfpoint{1cm}{1cm}}
    \pgfpathlineto{\pgfpoint{1cm}{0.5cm}}
    \pgfpathlineto{\pgfpoint{0.5cm}{0cm}}
    \pgfpathclose%
    \pgfusepath{fill}
    \pgfpathmoveto{\pgfpoint{0cm}{0.5cm}}
    \pgfpathlineto{\pgfpoint{0cm}{1cm}}
    \pgfpathlineto{\pgfpoint{0.5cm}{1cm}}
    \pgfpathclose%
    \pgfusepath{fill}
}

\begin{figure}
\centering

\begin{tabular}{c}
\begin{tikzpicture}[scale=0.7]

  \draw [fill=white,very thick] (0,0) rectangle (2,1);
  \draw [fill=lightgray,very thick] (0,1) rectangle (2,2);
  \draw [fill=lightgray,very thick] (0,2) rectangle (2,5);
  \draw [fill=lightgray,very thick] (0,5) rectangle (2,6);
  \draw [fill=yellow,very thick] (0,6) rectangle (2,7);
  \fill (1,2.75) circle (2pt);
  \fill (1,3.5) circle (2pt);
  \fill (1,4.25) circle (2pt);

\end{tikzpicture}
\\[\abovecaptionskip]
\small (a) Work stealing
\end{tabular}
\begin{tabular}{c}
\begin{tikzpicture}[scale=0.7]

  \draw [fill=white,very thick] (0,0) rectangle (2,1);
  \draw [fill=yellow,very thick] (0,1) rectangle (2,2);
  \draw [fill=yellow,very thick] (0,2) rectangle (2,5);
  \draw [fill=yellow,very thick] (0,5) rectangle (2,6);
  \draw [fill=yellow,very thick] (0,6) rectangle (2,7);
  \fill (1,2.75) circle (2pt);
  \fill (1,3.5) circle (2pt);
  \fill (1,4.25) circle (2pt);

\end{tikzpicture}
\\[\abovecaptionskip]
\small (b) Variant (1) of mugging
\end{tabular}
\begin{tabular}{c}
\begin{tikzpicture}[scale=0.7]

  \draw [fill=white,very thick,pattern=stripes, pattern color=yellow] (0,0) rectangle (2,1);
  \draw [fill=yellow,very thick] (0,1) rectangle (2,2);
  \draw [fill=yellow,very thick] (0,2) rectangle (2,5);
  \draw [fill=yellow,very thick] (0,5) rectangle (2,6);
  \draw [fill=yellow,very thick] (0,6) rectangle (2,7);
  \fill (1,2.75) circle (2pt);
  \fill (1,3.5) circle (2pt);
  \fill (1,4.25) circle (2pt);

\end{tikzpicture}
\\[\abovecaptionskip]
\small (c) Variant (2) of mugging
\end{tabular}

\caption{Deque of processor in variants of mugging} \label{fig:mugging}

\end{figure}
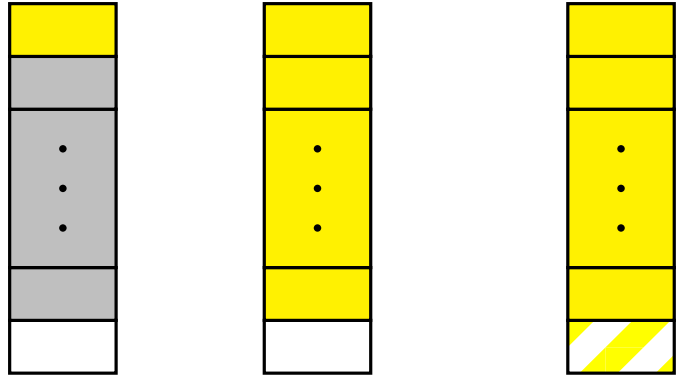

\section{Conclusion and Future Work}
\label{sec:conclusion}

In this paper, we have established running-time bounds on the localized work-stealing algorithm based on the delay-sequence argument and on amortization analysis. Here we suggest two possible directions for future work:

\begin{itemize}
\item This paper focuses on the setting in which the computation is modeled by binary trees. Can we achieve similar bounds for more general computational settings, e.g., one in which the computation is modeled by directed acyclic graphs (DAG)?
\item The hashing variant of the localized work-stealing algorithm (Section \ref{sec:localizedvariants}) is designed to counter the effect of pile-up on certain processors' work. What guarantees can we prove on the running time or the number of steals?
\end{itemize}




\end{document}